\documentclass[12pt]{article}
\usepackage[latin1]{inputenc}
\usepackage[british]{babel}
\usepackage{cmap}
\usepackage{lmodern}

\usepackage{amssymb, amsmath, amsthm}
\usepackage[a4paper,top=25mm,bottom=25mm,left=25mm,right=25mm]{geometry}
\usepackage{etex}
\usepackage{ragged2e}

\usepackage{authblk} 
\usepackage{pifont}
\usepackage{graphicx}
\usepackage[usenames,dvipsnames,svgnames,table]{xcolor}
\usepackage[figuresright]{rotating}
\usepackage{xtab} 
\usepackage{longtable} 
\usepackage{ltxtable} 
\usepackage{multirow}
\usepackage{footnote}
\usepackage[stable]{footmisc}
\usepackage{chngpage} 
\usepackage{pdflscape} 
\usepackage[nottoc,notlot,notlof]{tocbibind} 

\usepackage{pgfplots}
\pgfplotsset{compat=1.14}
\usepackage{setspace}

\usepackage{array}
\newcolumntype{K}[1]{>{\centering\arraybackslash$}p{#1}<{$}}
\usepackage{wasysym}

\makesavenoteenv{tabular}
\usepackage{tabularx}
\usepackage{booktabs}
\usepackage[flushleft]{threeparttable}
\usepackage[referable]{threeparttablex} 
\newcolumntype{R}{>{\raggedleft\arraybackslash}X}
\newcolumntype{L}{>{\raggedright\arraybackslash}X}
\newcolumntype{C}{>{\centering\arraybackslash}X}
\newcolumntype{M}[1]{>{\centering\arraybackslash}m{#1}}
\newcolumntype{A}{>{\columncolor{gray!25}}C}
\newcolumntype{a}{>{\columncolor{gray!25}}c}

\newlength{\tablen}

\usepackage{dcolumn} 
\newcolumntype{.}{D{.}{.}{-1}}

\usepackage{tikz}
\usetikzlibrary{arrows, automata, calc, matrix, patterns, positioning, shapes, trees}
\usepackage[semicolon]{natbib}
\usepackage[hyphens]{url}
\makeatletter
\g@addto@macro{\UrlBreaks}{\UrlOrds}
\makeatother
\usepackage{hyperref} 
\hypersetup{
  colorlinks   = true,    		
  urlcolor     = blue,    		
  linkcolor    = blue,    		
  citecolor    = ForestGreen  	
}
\usepackage{microtype}
\usepackage[justification=centerfirst]{caption} 

\usepackage[labelformat=simple]{subcaption}

\DeclareCaptionLabelFormat{parenthesis}{(#2)}
\captionsetup[subfigure]{labelformat=parenthesis,font+=small,list=false}
\makeatletter
\renewcommand\p@subfigure{\arabic{figure}.}
\makeatother

\DeclareCaptionLabelFormat{parenthesis}{(#2)}
\captionsetup[subtable]{labelformat=parenthesis,font+=small,list=false}
\makeatletter
\renewcommand\p@subtable{\arabic{table}.}
\makeatother

%
\def\addlegendimage{\csname pgfplots@addlegendimage\endcsname}

\usepackage{enumitem}
\setlist[itemize]{leftmargin=2.5\parindent}
\setlist[enumerate]{leftmargin=2.5\parindent}

\theoremstyle{plain}

\newtheorem{proposition}{Proposition}[section]

\theoremstyle{definition}

\newtheorem{definition}{Definition}[section]
\newtheorem{example}{Example}[section]

\theoremstyle{remark}


\def\keywords{\vspace{.5em} 
{\noindent \textit{Keywords}: }}

\def\JEL{\vspace{.5em} 
{\noindent \textbf{\emph{JEL} classification number}: }}

\def\AMS{\vspace{.5em} 
{\noindent \textbf{\emph{MSC} class}: }}

\author{\href{https://sites.google.com/view/laszlocsato}{L\'aszl\'o Csat\'o}\thanks{~E-mail: \emph{laszlo.csato@sztaki.hu}} }
\affil{Institute for Computer Science and Control (SZTAKI) \\
Laboratory on Engineering and Management Intelligence \\ Research Group of Operations Research and Decision Systems}
\affil{Corvinus University of Budapest (BCE) \\
Department of Operations Research and Actuarial Sciences}
\affil{Budapest, Hungary}

\title{Two issues of the UEFA Euro 2020 qualifying play-offs}
\date{\today}

\def\Dedication{ 
{\noindent \emph{Tel brille au second rang, qui s'eclipse au premier.}}\footnote{~``\emph{He shines in the second rank, who is eclipsed in the first.}'' Source: \url{https://quotes.yourdictionary.com/author/quote/539979}.
Downloaded 26 March 2019.}

\vspace{0.25cm}

\noindent
\small{(Voltaire: \emph{La Henriad})}
\vspace{1cm} }

\begin{document}

\maketitle

\Dedication

\begin{abstract}
\noindent
The play-offs of the UEFA Euro 2020 qualifying tournament determine the last four participants in the UEFA European Championship 2020. 16 teams, which have failed to obtain a slot in the qualifying group stage, will be selected and divided into four paths of four teams each based on the inaugural season 2018-19 of the UEFA Nations League.
We provide a critical examination of the relevant UEFA regulation and show that its articles contradict each other and allow for an unfair formation of the play-off paths: it might happen that all conditions cannot be satisfied simultaneously and a group winner might face stronger opponents than a non-group winner from the same league despite its better performance in the UEFA Nations League. Simple and straightforward solutions for both problems are suggested.

\keywords{consistency; fairness; mechanism design; sports rules; UEFA Euro}

\AMS{62F07, 91B14}

\JEL{C44, D71, Z20}
\end{abstract}

\clearpage

\section{Introduction}

The 2020 UEFA European Football Championship, commonly known as UEFA Euro 2020 or simply Euro 2020, is the official championship of men's national football teams, organised by the Union of European Football Associations (UEFA).\footnote{~On 17 March 2020, UEFA has announced that the tournament would be delayed by a year due to the coronavirus pandemic. The play-off matches have been postponed until further notice on 1 April 2020. Nonetheless, we will refer to this event as UEFA Euro 2020.}
Its qualifying play-offs provide a secondary route for the teams to compete in the final tournament.
In contrast to previous editions, the contestants of the play-offs are not determined by the results of the qualifying group stage but based on the inaugural season 2018-19 of the UEFA Nations League (UNL). Therefore, the relevant regulation \citep{UEFA2018c} has got some attention even in the mainstream media \citep{Guyon2019c}.
Since these rules have never been used before but are rather complicated, they may contain more imperfections than usual. 
The current note will identify two potential problems.

Analysis of sports rules from a theoretical perspective remains an important duty of scientific research since tournament design is a matter of significant financial concern for all stakeholders, and a matter of personal interest for millions of fans \citep{Szymanski2003}. As sports are of great interest to a high percentage of the world's population, they are important topics to be researched into \citep{Wright2014}.

It is well-known that sports rules sometimes suffer from unforeseen consequences \citep{KendallLenten2017}.
The issue seems to be especially relevant for the UEFA, the governing body of association football in Europe, as has been presented recently.
According to \citet{DagaevSonin2018}, qualification for the \href{https://en.wikipedia.org/wiki/UEFA_Europa_League}{UEFA Europa League} was incentive incompatible until the 2015-16 season, namely, a team might have benefited from losing instead of winning.
This problem occurred in a match of the \href{https://en.wikipedia.org/wiki/2011\%E2\%80\%9312_Eredivisie}{2011-12 season of the Eredivisie}, the highest echelon of professional football in the Netherlands \citep{Csato2019k}.
Because of the same reason, qualification for the \href{https://en.wikipedia.org/wiki/UEFA_Champions_League}{UEFA Champions League} did not satisfy incentive compatibility in the three seasons between 2016 and 2019 \citep{Csato2019b}.
The qualification for some recent FIFA World Cups and UEFA European Championships \citep{Csato2018b} also allowed for such manipulation but could be eliminated with a marginal amendment of the rules \citep{Csato2020f}.
Furthermore, it is even possible that the two teams playing a match are both interested in a draw \citep{Csato2020d}.
\citet{Guyon2018a} finds several flaws in the design of the \href{https://en.wikipedia.org/wiki/UEFA_Euro_2016}{UEFA Euro 2016} and suggests two fairer procedures. UEFA has used these results to modify the knockout bracket of the UEFA Euro 2020 in order to reduce its unfairness.
\citet{Guyon2019a} presents a new knockout format for the UEFA Euro 2020, which can maximise the number of home games.
Finally, \citet{HaugenKrumer2019} argue that tournament design is an integral part of sports management, and demonstrate a number of shortcomings of the UEFA Euro 2020 qualification, including the violation of a natural fairness criterion \citep{Csato2020b}.

In particular, two issues will be discussed: in certain cases, the rules of the UEFA Euro 2020 qualifying play-offs lead to a contradiction and allow for an unfair formation of the play-off paths. Although they may occur with a low probability, UEFA should plan for all possible scenarios without assuming that lower-ranked teams could not perform above their abilities, similarly to any sports governing body.

We think that both potential shortcomings should be addressed not only to ensure the conditions of consistency and fairness but also because they have a political dimension. First, the violation of logical inconsistency can be resolved only by an arbitrary decision, which might have a large reputational cost. Second, any UEFA member association may disapprove if its national team should contest a play-off path regarded unfair. Therefore, besides meeting important theoretical requirements, the implementation of the amendments suggested by us can avoid future criticism from crucial stakeholders, too.

The note is organised as follows.
Section~\ref{Sec2} shortly outlines the qualification for the UEFA Euro 2020 and presents the rules of the play-offs.
Two problems are identified in Section~\ref{Sec3}.
Section~\ref{Sec4} proposes solutions to these issues, while Section~\ref{Sec5} concludes.

\section{The play-offs of the UEFA Euro 2020 qualifying tournament} \label{Sec2}

The UEFA Euro 2020 qualifying tournament is a football competition played to select the 24 men's national teams from the 55 UEFA members that will participate in the \href{https://en.wikipedia.org/wiki/UEFA_Euro_2020}{UEFA European Championship 2020 final tournament}.

The qualification is strongly connected to the \href{https://en.wikipedia.org/wiki/2018\%E2\%80\%9319_UEFA_Nations_League}{2018-19 UEFA Nations League}. This competition divides the national teams into four divisions called \emph{leagues} according to their UEFA coefficients at the end of the 2018 FIFA World Cup qualifiers without the play-offs: the 12 highest-ranked teams form League A, the next 12 form League B, the next 15 form League C, and the remaining 16 form League D. The teams of a league play a home-away (double) round-robin tournament in four groups. After ranking the teams in each group, four league rankings are established and aggregated into the overall UEFA Nations League ranking. In particular,
the teams of League A occupy positions 1st to 12th based on the ranking in League A,
the teams of League B occupy positions 13th to 24th based on the ranking in League B,
the teams of League C occupy positions 25th to 39th based on the ranking in League C, and
the teams of League D occupy positions 40th to 55th based on the ranking in League D.
This note will denote all teams by these numbers.

\begin{table}[ht]
  \centering
  \caption{An overview of the qualification for the UEFA Euro 2020}
  \label{Table1}
  \rowcolors{3}{gray!20}{}
\begin{threeparttable}
    \begin{tabularx}{\textwidth}{ccCc} \toprule \hiderowcolors
	UNL rank & League & Place in the seeding of the UEFA Euro 2020 qualifiers & Remark \\ \midrule \showrowcolors
    1--4 (GW) & A     & UNL Pot & \begin{tabular}{@{}c@{}} drawn into a group of five \\ assured of at least play-offs \end{tabular} \\
    5--10 & A     & Pot 1 & --- \\
    11--12 & A     & Pot 2 & --- \\ \hline
    13--16 (GW) & B     & Pot 2 & assured of at least play-offs \\
    17--20 & B     & Pot 2 & --- \\
    21--24 & B     & Pot 3 & --- \\ \hline
    25--28 (GW) & C     & Pot 3 & assured of at least play-offs \\
    29--30 & C     & Pot 3 & --- \\
    31--39 & C     & Pot 4 & --- \\ \hline
    40 (GW) & D     & Pot 4 & assured of at least play-offs \\
    41--43 (GW) & D     & Pot 5 & assured of at least play-offs \\
    44--50 & D     & Pot 5 & --- \\
    51--55 & D     & Pot 6 & --- \\ \bottomrule
    \end{tabularx}
\begin{tablenotes} \footnotesize
\item GW stands for group winner in the UEFA Nations League
\end{tablenotes}
\end{threeparttable}
\end{table}

Table~\ref{Table1} provides an overview of the qualification for the UEFA Euro 2020.
The whole process is regulated by two official documents \citep{UEFA2018c, UEFA2018e}. It is also discussed in \citet{Csato2020b}.

The \href{https://en.wikipedia.org/wiki/UEFA_Euro_2020_qualifying}{UEFA Euro 2020 qualifying tournament}, organised between March 2019 and November 2019, allocates the teams into 10 groups: Groups A--D get one team from the UNL Pot each, and one team from Pots 2--5 each, Group E gets one team from Pots 1--5 each, while Groups F--J get one team from Pots 1--6 each (see Table~\ref{Table1} for the composition of the pots). From all groups, the top two, thus altogether $20$, teams qualify.

The last four participants of the UEFA Euro 2020 are determined by \href{https://en.wikipedia.org/wiki/UEFA_Euro_2020_qualifying_play-offs}{the play-offs of the UEFA Euro 2020 qualifying tournament}.
Unlike previous editions of the UEFA European Championship, teams do not advance to the play-offs from the qualifying group stage, but they are selected based on their performance in the 2018-19 UEFA Nations League as follows \citep[Articles~16.01--16.03]{UEFA2018c}:

``\emph{Sixteen teams enter the play-offs, which are played in four separate paths of four teams each, to determine the remaining four teams that qualify for the final tournament.}

\emph{
To determine the 16 teams that enter the play-offs, the following principles apply in the order given:
\begin{enumerate}[label=\alph*.]
\item
Four play-off slots are allocated to each league from UEFA Nations League D to UEFA Nations League A, i.e. in reverse alphabetical order.
\item
The UEFA Nations League group winners enter the play-offs unless they have qualified for the final tournament directly from the qualifying group stage.
\item
If a UEFA Nations League group winner has directly qualified for the final tournament, the next best-ranked team in the relevant league ranking which has not directly qualified will enter the play-offs.
\item
If fewer than four teams from one league enter the play-offs, the remaining slots are allocated on the basis of the overall UEFA Nations League rankings to the best-ranked of the teams that have not already qualified for the final tournament, subject to the restriction that group winners cannot be in a play-off path with higher-ranked teams.
\end{enumerate}
}

\emph{
The UEFA administration conducts a draw to allocate teams to the different play-offs path, starting with UEFA Nations League D, subject to the following conditions:
\begin{enumerate}[label=\alph*.]
\item
A group winner cannot form a path with a team from a higher-ranked league in the overall UEFA Nations League rankings.
\item
If four or more teams from a league enter the play-offs, a path with four teams from the league in question must be formed.
\item
Additional conditions may be applied, subject to approval by the UEFA Executive Committee, including seeding principles and the possibility of final tournament host associations having to be drawn into different paths.''
\end{enumerate}
}
For the sake of simplicity, \citet[Article~16.02]{UEFA2018c} is called the \emph{team selection rule}, while \citet[Article~16.03]{UEFA2018c} is called the \emph{path formation rule} in the following.

The four teams of a play-off path play such that the highest-ranked team is matched with the lowest-ranked team and the two middle teams are matched against each other in the semifinals, hosted by the higher-ranked teams, while the final is contested by the winners of the semifinals, and is hosted by the winner of a semifinal drawn in advance \citep[Article~17]{UEFA2018c}. The winner of this match qualifies for the UEFA Euro 2020.

The rules above are illustrated by the real results.

\begin{figure}[ht!]
\centering
\caption{The result of the qualification for the UEFA Euro 2020 \vspace{0.25cm} \\ 
\footnotesize{\textit{Italic} numbers indicate the 16 Nations League group winners \\
\textcolor{blue}{Blue} \underline{underlined} numbers indicate the 20 teams directly qualified \\
\textcolor{red}{Red} \textbf{bold} numbers indicate the 16 teams selected for the play-offs}}
\label{Fig3}

\begin{subfigure}{\textwidth}
\centering
\caption{The teams that are directly qualified or selected for the play-offs}
\label{Fig3a}
\begin{tikzpicture}[scale=1, auto=center, transform shape, >=triangle 45]
  \path
    (-4,0) coordinate (A) node {
    \begin{tabularx}{0.4\textwidth}{CCCC} \toprule
    \multicolumn{4}{c}{\textbf{League A}} \\ 
    A1    & A2    & A3    & A4 \\ \midrule
    \textcolor{blue}{\underline{\textit{1}}} & \textcolor{blue}{\underline{\textit{2}}} & \textcolor{blue}{\underline{\textit{3}}} & \textcolor{blue}{\underline{\textit{4}}} \\
    \textcolor{blue}{\underline{5}}     & \textcolor{blue}{\underline{6}}     & \textcolor{blue}{\underline{7}}     & \textcolor{blue}{\underline{8}} \\
    \textcolor{blue}{\underline{9}}     & \textcolor{blue}{\underline{10}}    & \textcolor{blue}{\underline{11}}    & \textcolor{red}{\textbf{12}} \\ \bottomrule
    \end{tabularx}
    }
    (4,0) coordinate (B) node {
    \begin{tabularx}{0.4\textwidth}{CCCC} \toprule
    \multicolumn{4}{c}{\textbf{League B}} \\ 
    B1    & B2    & B3    & B4 \\ \midrule
    \textcolor{red}{\textbf{\textit{13}}} & \textcolor{blue}{\underline{\textit{14}}} & \textcolor{blue}{\underline{\textit{15}}} & \textcolor{blue}{\underline{\textit{16}}} \\
    \textcolor{blue}{\underline{17}}    & \textcolor{blue}{\underline{18}}    & \textcolor{blue}{\underline{19}}    & \textcolor{blue}{\underline{20}} \\
    \textcolor{red}{\textbf{21}}    & \textcolor{blue}{\underline{22}}    & \textcolor{red}{\textbf{23}}    & \textcolor{red}{\textbf{24}} \\ \bottomrule
    \end{tabularx}
    };
\end{tikzpicture}
\begin{tikzpicture}[scale=1, auto=center, transform shape, >=triangle 45]
  \path
    (-4,0) coordinate (C) node {
    \begin{tabularx}{0.4\textwidth}{CCCC} \toprule
    \multicolumn{4}{c}{\textbf{League C}} \\ 
    C1    & C2    & C3    & C4 \\ \midrule
    \textcolor{red}{\textbf{\textit{25}}} & \textcolor{red}{\textbf{\textit{26}}} & \textcolor{red}{\textbf{\textit{27}}} & \textcolor{blue}{\underline{\textit{28}}} \\
    \textcolor{red}{\textbf{29}}    & \textcolor{red}{\textbf{30}}    & \textcolor{red}{\textbf{31}}    & \textcolor{red}{\textbf{32}} \\
    33    & 34    & 35    & 36 \\
          & 37    & 38    & 39 \\ \bottomrule
    \end{tabularx}
    }
    (4,0) coordinate (D) node {
    \begin{tabularx}{0.4\textwidth}{CCCC} \toprule
    \multicolumn{4}{c}{\textbf{League D}} \\ 
    D1    & D2    & D3    & D4 \\ \midrule
   \textcolor{red}{\textbf{\textit{40}}} & \textcolor{red}{\textbf{\textit{41}}} & \textcolor{red}{\textbf{\textit{42}}} & \textcolor{red}{\textbf{\textit{43}}} \\
    44    & 45    & 46    & 47 \\
    48    & 49    & 50    & 51 \\
    52    & 53    & 54    & 55 \\ \bottomrule
    \end{tabularx}
    };
\end{tikzpicture}
\end{subfigure}

\vspace{0.5cm}
\begin{subfigure}{\textwidth}
\centering
\caption{The path formation}
\label{Fig3b}

\begin{tikzpicture}[scale=1, auto=center, transform shape, >=triangle 45]
  \path
    (-6,0) coordinate (A) node {
    \begin{tabularx}{0.2\textwidth}{CC} \toprule
    \multicolumn{2}{c}{\textbf{Path A}} \\ \midrule
    12	& 32 \\ 
    29	& 31 \\ \bottomrule
    \end{tabularx}
    }
    (-2,0) coordinate (B) node {
    \begin{tabularx}{0.2\textwidth}{CC} \toprule
    \multicolumn{2}{c}{\textbf{Path B}} \\ \midrule
    \textit{13}	& 24 \\ 
    21	& 23 \\ \bottomrule
    \end{tabularx}
    }    
    (2,0) coordinate (C) node {
    \begin{tabularx}{0.2\textwidth}{CC} \toprule
    \multicolumn{2}{c}{\textbf{Path C}} \\ \midrule
    \textit{25}	& 30 \\ 
    \textit{26}	& \textit{27} \\ \bottomrule
    \end{tabularx}
    }
    (6,0) coordinate (D) node {
    \begin{tabularx}{0.2\textwidth}{CC} \toprule
    \multicolumn{2}{c}{\textbf{Path D}} \\ \midrule
    \textit{40}	& \textit{43} \\ 
    \textit{41}	& \textit{42} \\ \bottomrule
    \end{tabularx}
    };
\end{tikzpicture}
\end{subfigure}

\end{figure}


\begin{example} \label{Examp21}
Figure~\ref{Fig3a} outlines the situation at the end of the UEFA Euro 2020 qualifying tournament in November 2019.
No team qualified from League D, hence the four group winners ($40$--$43$) form Path D in the play-offs \citep[Article~16.03b]{UEFA2018c}.
From the four group winners in League C, only team $28$ has obtained a quota, therefore team $29$ enters the play-offs \citep[Article~16.02c]{UEFA2018c}.
From League B, four teams ($13$, $21$, $23$, $24$), among them a group-winner, have failed to qualify, they form Path B in the play-offs \citep[Article~16.03b]{UEFA2018c}.
From League A, only team $12$ has failed to qualify.
According to \citep[Article~16.02d]{UEFA2018c}, the remaining three slots in the play-offs are allocated to the highest-ranked teams that failed to qualify, which are teams $30$--$32$ from League C.

The team selection rule is detailed in Figure~\ref{Fig1} in the Appendix via this example.

Forming Paths A and C has required a draw. \citet[Article~16.03a]{UEFA2018c} provides that the three group winners in League C ($25$--$27$) are in Path C, and team $30$ has been drawn there. Consequently, Path A is formed by teams $12$, $29$, $31$, and $32$ as Figure~\ref{Fig3b} shows.

In each path, the semifinals are hosted by the two higher-ranked teams, for example, team $29$ plays against team $31$ at home. According to the random draw, the winner of this semifinal hosts the final of Path A.

The path formation rule is detailed in Figure~\ref{Fig2} in the Appendix via this example.
\end{example}

\section{The problems of the regulation} \label{Sec3}

We formulate here two reasonable conditions and show that they are not guaranteed by the rules of the UEFA Euro 2020 qualifying play-offs.

\subsection{Consistency} \label{Sec31}

The first requirement deals with the logical connections between the criteria of team selection.

\begin{definition} \label{Def31}
\emph{Consistency}: A team selection rule is called \emph{consistent} if its principles do not contradict each other.
Otherwise, it is said to be \emph{inconsistent}.
\end{definition}

Inconsistency means a problem because a possible scenario not addressed by the rules necessitates an arbitrary decision, which can lead to long controversies such as in the case of \href{https://en.wikipedia.org/wiki/Liverpool_F.C._2005\%E2\%80\%9306_UEFA_Champions_League_qualification}{the qualification of the titleholder Liverpool F.C. for the 2005-06 UEFA Champions League}.

\begin{proposition} \label{Prop31}
The team selection rule of \citet[Article~16.02]{UEFA2018c} is inconsistent.
\end{proposition}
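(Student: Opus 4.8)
The plan is to prove inconsistency (in the sense of Definition~\ref{Def31}) by exhibiting one admissible outcome of the 2018--19 UEFA Nations League for which the principles of \citet[Article~16.02]{UEFA2018c} cannot all be honoured simultaneously. Definition~\ref{Def31} only requires the principles to contradict \emph{each other}, so it is enough to display such a scenario together with a short derivation that no team selection compatible with principles~(a)--(c) can also obey the proviso attached to principle~(d).

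The scenario I would use is essentially the one of Example~\ref{Examp21} (or any outcome with the same qualitative features): League~A supplies only a single team to the play-offs, all four group winners of League~D fail to qualify, and three group winners of League~C, together with enough of its non-winners, enter. Because fewer than four teams come from League~A, principle~(d) is triggered, so its proviso --- ``group winners cannot be in a play-off path with higher-ranked teams'' --- becomes part of the applicable rule. The derivation then runs through the forced moves: by Articles~16.01 and~16.02b every non-qualified group winner enters and the $16$ teams split into four paths of four; by Article~16.03a the League~C group winners (teams $25$, $26$, $27$ in the notation of Example~\ref{Examp21}) can share a path only with teams of Leagues~C or~D; by Article~16.03b the four League~D entrants fill an entire path, leaving no League~D team for them; and a counting argument shows that the remaining League~C teams are too few to separate $25$, $26$ and $27$ into distinct paths. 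Hence two of those group winners necessarily lie in a common path, and the lower-ranked one then shares it with a higher-ranked team, contradicting the proviso of principle~(d).

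The step I expect to be the main obstacle is making the ``forced together'' claim watertight while staying inside the genuine feasibility constraints: one must check that, given the other articles, no admissible reshuffling of the paths keeps every group winner away from all higher-ranked teams, and one must check that the scenario itself is realisable (exactly twenty teams qualify directly, the pot structure of Table~\ref{Table1}, and so on). Taking the scenario to be the real 2019 outcome dispatches both at once --- feasibility is automatic, and the admissible path configurations are few enough to list. A point worth making explicit in the write-up is the textual discrepancy that powers the contradiction: Article~16.02d bars a group winner from any path containing a higher-ranked \emph{team}, whereas Article~16.03a only bars a higher-ranked \emph{league}; it is the literal wording of Article~16.02d that renders the principles mutually unsatisfiable.
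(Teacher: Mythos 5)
There is a genuine gap, and it lies exactly where you flagged the risk: the ``forced together'' claim does not yield a contradiction under the reading of Article~16.02d that the paper (and UEFA) actually uses. Your scenario is the real 2019 outcome, which the paper presents in Example~\ref{Examp21} as a \emph{valid} application of the rules: Path~C $=\{25,26,27,30\}$ and Path~A $=\{12,29,31,32\}$ were formed without any principle being deemed violated. Your contradiction requires reading ``group winners cannot be in a play-off path with higher-ranked teams'' as forbidding a group winner from sharing a path with \emph{any} team of better overall UNL rank, including a fellow group winner of its own league (so that $26$ being with $25$ is already illegal). That reading proves far too much: it would also outlaw Path~D $=\{40,41,42,43\}$ in the real outcome, and indeed every path containing two group winners, including the scenarios UEFA itself published in \citet{UEFA2017g} and the paper's own Figures~\ref{Fig3b}, \ref{Fig5b} and \ref{Fig6b}. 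The restriction in principle~(d) is a condition on where the \emph{additionally allocated} teams (here $30$, $31$, $32$) may be placed relative to group winners ranked below them; under that reading the 2019 outcome is perfectly consistent, and your derivation collapses. A secondary weakness is that your argument leans on Articles~16.03a--b (the path formation rule) to force $25$, $26$, $27$ together, whereas the proposition concerns the team selection rule of Article~16.02.

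The paper avoids all of this by constructing a counterfactual scenario (Figure~\ref{Fig4a}) in which League~D has exactly three teams available for the play-offs, all of them group winners ($40$--$42$), because every other League~D team has qualified directly. Principle~(b) forces those three in; since paths have four teams, their path must be completed by a team from a higher league, which is higher-ranked than $40$--$42$ under \emph{any} reasonable interpretation of the proviso in principle~(d). The contradiction is therefore robust to the textual ambiguity you identified, rather than dependent on resolving it in one particular direction. If you want to keep your general strategy, you must engineer a scenario of this kind -- one where the team that fills the remaining slot is unavoidably from a strictly higher league than the group winners it joins -- rather than rely on the real outcome.
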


\begin{proof}
It is sufficient to provide an example where the principles are contradictory.

\begin{figure}[ht!]
\centering
\caption{The team selection rule of the play-offs can be inconsistent \vspace{0.25cm} \\ 
\footnotesize{\textit{Italic} numbers indicate the 16 Nations League group winners \\
\textcolor{blue}{Blue} \underline{underlined} numbers indicate the 20 teams directly qualified \\
\textcolor{red}{Red} \textbf{bold} numbers indicate the 16 teams selected for the play-offs}}
\label{Fig4}

\begin{subfigure}{\textwidth}
\centering
\caption{A feasible scenario}
\label{Fig4a}
\begin{tikzpicture}[scale=1, auto=center, transform shape, >=triangle 45]
  \path
    (-4,0) coordinate (A) node {
    \begin{tabularx}{0.4\textwidth}{CCCC} \toprule
    \multicolumn{4}{c}{\textbf{League A}} \\ 
    A1    & A2    & A3    & A4 \\ \midrule
    \textcolor{blue}{\underline{\textit{1}}} & \textcolor{blue}{\underline{\textit{2}}} & \textcolor{blue}{\underline{\textit{3}}} & \textcolor{blue}{\underline{\textit{4}}} \\
    \textcolor{red}{\textbf{5}}     & \textcolor{red}{\textbf{6}}     & \textcolor{red}{\textbf{7}}     & \textcolor{red}{\textbf{8}} \\
    9     & 10    & 11    & 12 \\ \bottomrule
    \end{tabularx}
    }
    (4,0) coordinate (B) node {
    \begin{tabularx}{0.4\textwidth}{CCCC} \toprule
    \multicolumn{4}{c}{\textbf{League B}} \\ 
    B1    & B2    & B3    & B4 \\ \midrule
    \textcolor{red}{\textbf{\textit{13}}} & \textcolor{red}{\textbf{\textit{14}}} & \textcolor{red}{\textbf{\textit{15}}} & \textcolor{red}{\textbf{\textit{16}}} \\
    \textcolor{blue}{\underline{17}}    & \textcolor{blue}{\underline{18}}    & \textcolor{blue}{\underline{19}}    & 20 \\
    21    & 22    & 23    & 24 \\ \bottomrule
    \end{tabularx}
    };
\end{tikzpicture}
\begin{tikzpicture}[scale=1, auto=center, transform shape, >=triangle 45]
  \path
    (-4,0) coordinate (C) node {
    \begin{tabularx}{0.4\textwidth}{CCCC} \toprule
    \multicolumn{4}{c}{\textbf{League C}} \\ 
    C1    & C2    & C3    & C4 \\ \midrule
    \textcolor{red}{\textbf{\textit{25}}} & \textcolor{red}{\textbf{\textit{26}}} & \textcolor{red}{\textbf{\textit{27}}} & \textcolor{red}{\textbf{\textit{28}}} \\
    29    & 30    & 31    & 32 \\
    33    & 34    & 35    & 36 \\
          & 37    & 38    & \textcolor{red}{\textbf{39}} \\ \bottomrule
    \end{tabularx}
    }
    (4,0) coordinate (D) node {
    \begin{tabularx}{0.4\textwidth}{CCCC} \toprule
    \multicolumn{4}{c}{\textbf{League D}} \\ 
    D1    & D2    & D3    & D4 \\ \midrule
    \textcolor{red}{\textbf{\textit{40}}} & \textcolor{red}{\textbf{\textit{41}}} & \textcolor{red}{\textbf{\textit{42}}} & \textcolor{blue}{\underline{\textit{43}}} \\
    \textcolor{blue}{\underline{44}}    & \textcolor{blue}{\underline{45}}    & \textcolor{blue}{\underline{46}}    & \textcolor{blue}{\underline{47}} \\
    \textcolor{blue}{\underline{48}}    & \textcolor{blue}{\underline{49}}    & \textcolor{blue}{\underline{50}}    & \textcolor{blue}{\underline{51}} \\
    \textcolor{blue}{\underline{52}}    & \textcolor{blue}{\underline{53}}    & \textcolor{blue}{\underline{54}}    & \textcolor{blue}{\underline{55}} \\ \bottomrule
    \end{tabularx}
    };
\end{tikzpicture}
\end{subfigure}

\vspace{0.5cm}
\begin{subfigure}{\textwidth}
\centering
\caption{The suggested path formation}
\label{Fig4b}

\begin{tikzpicture}[scale=1, auto=center, transform shape, >=triangle 45]
  \path
    (-6,0) coordinate (A) node {
    \begin{tabularx}{0.2\textwidth}{CC} \toprule
    \multicolumn{2}{c}{\textbf{Path A}} \\ \midrule
    5	& 8 \\ 
    6	& 7 \\ \bottomrule
    \end{tabularx}
    }
    (-2,0) coordinate (B) node {
    \begin{tabularx}{0.2\textwidth}{CC} \toprule
    \multicolumn{2}{c}{\textbf{Path B}} \\ \midrule
    \textit{13}	& \textit{16} \\ 
    \textit{14}	& \textit{15} \\ \bottomrule
    \end{tabularx}
    }    
    (2,0) coordinate (C) node {
    \begin{tabularx}{0.2\textwidth}{CC} \toprule
    \multicolumn{2}{c}{\textbf{Path C}} \\ \midrule
    \textit{25}	& \textit{28} \\ 
    \textit{26}	& \textit{27} \\ \bottomrule
    \end{tabularx}
    }
    (6,0) coordinate (D) node {
    \begin{tabularx}{0.2\textwidth}{CC} \toprule
    \multicolumn{2}{c}{\textbf{Path D}} \\ \midrule
    39	& \textit{42} \\ 
    \textit{40}	& \textit{41} \\ \bottomrule
    \end{tabularx}
    };
\end{tikzpicture}
\end{subfigure}

\end{figure}


Figure~\ref{Fig4a} outlines such a scenario.
Teams 40-42 are group winners in League D, therefore they enter the play-offs. However, since all other teams in League D have already been qualified, the three group winners in League D should be in a play-off path with a higher-ranked team, which is not allowed by \citet[Article~16.02d]{UEFA2018c}.
\end{proof}

Independently of us, this issue has been identified on an online football forum.\footnote{~See the post of the user ``Forza AZ'' on 10 October 2017, 23:41 at \url{https://kassiesa.net/uefa/forum2/viewtopic.php?f=5&t=3463&start=150}. We are grateful to an anonymous referee for the remark.}

\subsection{Fairness} \label{Sec32}

The second condition concerns the strength of play-off paths. First, note that the rules differentiate between the leagues and prefer group winners, therefore the teams can be divided into a hierarchy of eight levels:
teams 1--4 (group winners in League A);
teams 5--12 (non-group winners in League A);
teams 13--16 (group winners in League B);
teams 17--24 (non-group winners in League B);
teams 25--28 (group winners in League C);
teams 29--39 (non-group winners in League C);
teams 40--43 (group winners in League D); and
teams 44--55 (non-group winners in League D).

A team that has a higher place in this hierarchy is called \emph{stronger}.
The implied binary relation is denoted by $\succeq$: $i \succeq j$ if team $i$ is at least as strong as team $j$, $i \sim j$ if team $i$ is at the same level of the hierarchy as team $j$, and $i \succ j$ if team $i$ is stronger than team $j$.

The play-off paths are represented by the four teams that form them.

\begin{definition} \label{Def32}
\emph{Difficulty of play-off paths}:
Consider two teams $i,j$ participating in different play-off paths $P_i = \{i, i_1, i_2, i_3 \}$ and $P_j = \{j, j_1, j_2, j_3 \}$, where $i_1 < i_2 < i_3$ and $j_1 < j_2 < j_3$.
The play-off path $P_i$ of team $i$ is \emph{more difficult} than the play-off path $P_j$ of team $j$ if one of the following holds:
\begin{enumerate}[label=\roman*)]
\item \label{Con1}
$i_1 \succeq j_1$, $i_2 \succeq j_2$, and $i_3 \succeq j_3$, furthermore, $i_m \succ j_m$ for at least one $1 \leq m \leq 3$; or
\item \label{Con2}
$i_1 \succ j_1$,
\end{enumerate}
and it is not true that team $i$ plays at home but team $j$ plays away in the semifinals of the corresponding play-off paths.
\end{definition}

Condition~\ref{Con1} of Definition~\ref{Def32} is straightforward as it is more difficult to qualify against a set of better teams. Condition~\ref{Con2} can be necessary because the play-off path $P_i$ is certainly more difficult than the play-off path $P_j$ if a stronger team always defeats a weaker one, and the strongest opponent in a play-off path is the likely contestant in the final.
The host country is taken into account because home advantage is a well-attested phenomenon in international football \citep{BakerMcHale2018}.

This order between the play-off paths is not complete: it might happen that the play-off path $P_i$ of team $i$ is not more difficult than the play-off path $P_j$ of team $j$, and vice versa.

According to the underlying idea of \citet[Article~16]{UEFA2018c}, the group winners of the UEFA Nations League are preferred in the play-offs as they cannot form a path with a higher-ranked (stronger) team.
However, this requirement still does not guarantee intra league monotonicity, which motivates the following fairness condition.

\begin{definition} \label{Def33}
A path formation rule is \emph{unfair} if a group winner may be in a more difficult play-off path than a non-group winner of the same league.
\end{definition}

Although the qualification is intentionally designed in such a way that teams from League D can obtain a place in the UEFA Euro 2020 at the expense of higher-ranked teams from Leagues A, B or C, we think that (possibly) punishing a group winner for its better performance in the UEFA Nations League compared to a team from the same league is hard to explain.

Since condition~\ref{Con2} of Definition~\ref{Def32} is a less obvious requirement compared to condition~\ref{Con1}, it may be even left out from the interpretation of unfairness. Furthermore, analogously to the suggestion of \citet{Guyon2019a}, it is possible to allow for the group winner to decide whether it has a more difficult play-off path, and to choose the one which is judged more favourable by the team.

\begin{proposition} \label{Prop32}
The path formation rule of \citet[Article~16.03]{UEFA2018c} is unfair.
\end{proposition}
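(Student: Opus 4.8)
The plan is to prove the proposition, as with Proposition~\ref{Prop31}, by exhibiting one explicit example: a feasible scenario for the qualifying tournament, together with a formation of the four paths that complies with every clause of the path formation rule, in which Definition~\ref{Def33} is violated. The difficulty is to arrange three things simultaneously: (a) the 16 selected teams really are the output of the team selection rule under the constraint that exactly 20 teams qualify directly; (b) the path formation respects Article~16.03a (no group winner shares a path with a higher-ranked league) and Article~16.03b (a full four-team path is carved out from every league supplying at least four teams); and (c) some group winner lands, in the sense of Definition~\ref{Def32}, in a strictly more difficult path than a non-group winner of its own league --- home advantage included.

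For concreteness I would use the scenario in which teams $1$--$20$ qualify directly. Then League~A supplies no play-off team at all; by Article~16.02c League~B supplies only its four lowest-ranked teams $21$--$24$; by Article~16.02d the four vacant League~A slots are backfilled with teams $29$--$32$ (non-group winners of League~C); and Leagues~C and~D supply their group winners $25$--$28$ and $40$--$43$. This selection is consistent: the only group winners present, $25$--$28$ and $40$--$43$, are not forced to share a path with any higher-ranked team. When the paths are drawn, Articles~16.03a and~16.03b already pin down the paths $\{40,41,42,43\}$ and $\{21,22,23,24\}$; the remaining eight teams are all from League~C, so \emph{any} split of them into two four-team paths obeys 16.03a--16.03b, and in particular the split $\{25,26,27,28\}$ versus $\{29,30,31,32\}$ is admissible.

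It then remains to check the unfairness for this admissible outcome. Group winner $28$ sits in $\{25,26,27,28\}$, whose other three members $25,26,27$ are group winners of League~C, while non-group winner $29$ of the same league sits in $\{29,30,31,32\}$, whose other three members $30,31,32$ are non-group winners of League~C; since a group winner of a league is stronger than a non-group winner of that league, $25\succ 30$, $26\succ 31$ and $27\succ 32$, so both condition~\ref{Con1} and condition~\ref{Con2} of Definition~\ref{Def32} hold. By Article~17 team $28$ --- the lowest-ranked in its path --- plays its semifinal away, so the proviso ``it is not true that team $i$ plays at home but team $j$ plays away'' is satisfied trivially. Hence the path of the group winner $28$ is more difficult than that of the non-group winner $29$, and by Definition~\ref{Def33} the path formation rule is unfair.

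The step I expect to be the main obstacle is (a), making the example faithful to the team selection rule. At least four of the teams ranked $1$--$24$ inevitably miss direct qualification, so one cannot just empty the two strongest leagues; the reason for letting exactly teams $1$--$20$ qualify is that the unavoidable League~B non-qualifiers then fall into a single all-League-B path, which forces Article~16.02d to refill League~A with surplus League~C teams --- and it is precisely those surplus teams that make the lopsided League~C split possible. The home--away clause of Definition~\ref{Def32} is the other point to watch, but it is harmless here because the disadvantaged group winner is the weakest member of an all-group-winner path and so plays away in any case.
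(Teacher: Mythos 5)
Your proof is correct, but it rests on a different counterexample from the paper's. The paper takes the scenario that UEFA itself published (its ``European qualifiers play-offs scenario 1''), in which the unfairness arises inside League B: group winner $13$ can be drawn into a path with opponents $\{20,21,24\}$ while non-group winner $17$ gets $\{23,29,30\}$, and the comparison $20 \sim 23$, $21 \succ 29$, $24 \succ 30$ (both teams at home) triggers condition~\ref{Con1} of Definition~\ref{Def32}. You instead build a scenario where exactly teams $1$--$20$ qualify directly, so that League C supplies all eight of teams $25$--$32$ and the draw may legally produce the split $\{25,26,27,28\}$ versus $\{29,30,31,32\}$; this is feasible (by the pot allocation each qualifying group contains exactly two of teams $1$--$20$, so they can be the two qualifiers, and the team selection then runs exactly as you describe, mirroring the mechanism of Example~\ref{Examp21}). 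Your example is in a sense stronger and cleaner: all three opponent comparisons are strict, so both conditions~\ref{Con1} and~\ref{Con2} of Definition~\ref{Def32} hold at once, and the home--away proviso is satisfied in the trivial direction since the disadvantaged group winner $28$ even plays its semifinal away while $29$ plays at home. What the paper's choice buys is rhetorical force -- it shows that UEFA's own illustrative material already contains an unfair formation -- and a milder example in which only condition~\ref{Con1} is needed; what yours buys is a fully self-contained construction whose admissibility under Articles~16.03a--16.03b is immediate because the last two paths are forced to be all-League-C. Both establish the proposition.
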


\begin{proof}
It is sufficient to provide an example where the principles do not guarantee fairness.

\begin{figure}[ht!]
\centering
\caption{The path formation rule of the play-offs may violate the first fairness condition \vspace{0.25cm} \\ 
\footnotesize{\textit{Italic} numbers indicate the 16 Nations League group winners \\
\textcolor{blue}{Blue} \underline{underlined} numbers indicate the 20 teams directly qualified \\
\textcolor{red}{Red} \textbf{bold} numbers indicate the 16 teams selected for the play-offs}}
\label{Fig5}

\begin{subfigure}{\textwidth}
\centering
\caption{A possible scenario \citep[European~qualifiers~play-offs~scenario~1]{UEFA2017g}}
\label{Fig5a}
\begin{tikzpicture}[scale=1, auto=center, transform shape, >=triangle 45]
  \path
    (-4,0) coordinate (A) node {
    \begin{tabularx}{0.4\textwidth}{CCCC} \toprule
    \multicolumn{4}{c}{\textbf{League A}} \\ 
    A1    & A2    & A3    & A4 \\ \midrule
    \textcolor{blue}{\underline{\textit{1}}} & \textcolor{blue}{\underline{\textit{2}}} & \textcolor{blue}{\underline{\textit{3}}} & \textcolor{blue}{\underline{\textit{4}}} \\
    \textcolor{blue}{\underline{5}}     & \textcolor{blue}{\underline{6}}     & \textcolor{blue}{\underline{7}}     & \textcolor{blue}{\underline{8}} \\
    \textcolor{blue}{\underline{9}}     & \textcolor{blue}{\underline{10}}    & \textcolor{blue}{\underline{11}}    & \textcolor{blue}{\underline{12}} \\ \bottomrule
    \end{tabularx}
    }
    (4,0) coordinate (B) node {
    \begin{tabularx}{0.4\textwidth}{CCCC} \toprule
    \multicolumn{4}{c}{\textbf{League B}} \\ 
    B1    & B2    & B3    & B4 \\ \midrule
    \textcolor{red}{\textbf{\textit{13}}} & \textcolor{blue}{\underline{\textit{14}}} & \textcolor{blue}{\underline{\textit{15}}} & \textcolor{blue}{\underline{\textit{16}}} \\
    \textcolor{red}{\textbf{17}} & \textcolor{blue}{\underline{18}}    & \textcolor{blue}{\underline{19}}    & \textcolor{red}{\textbf{20}} \\
    \textcolor{red}{\textbf{21}} & \textcolor{blue}{\underline{22}}    & \textcolor{red}{\textbf{23}}    & \textcolor{red}{\textbf{24}} \\ \bottomrule
    \end{tabularx}
    };
\end{tikzpicture}
\begin{tikzpicture}[scale=1, auto=center, transform shape, >=triangle 45]
  \path
    (-4,0) coordinate (C) node {
    \begin{tabularx}{0.4\textwidth}{CCCC} \toprule
    \multicolumn{4}{c}{\textbf{League C}} \\ 
    C1    & C2    & C3    & C4 \\ \midrule
    \textcolor{red}{\textbf{\textit{25}}} & \textcolor{blue}{\underline{\textit{26}}} & \textcolor{red}{\textbf{\textit{27}}} & \textcolor{red}{\textbf{\textit{28}}} \\
    \textcolor{red}{\textbf{29}} & \textcolor{red}{\textbf{30}} & \textcolor{blue}{\underline{31}}    & \textcolor{red}{\textbf{32}} \\
    33    & 34    & 35    & 36 \\
          & 37    & 38    & 39 \\ \bottomrule
    \end{tabularx}
    }
    (4,0) coordinate (D) node {
    \begin{tabularx}{0.4\textwidth}{CCCC} \toprule
    \multicolumn{4}{c}{\textbf{League D}} \\ 
    D1    & D2    & D3    & D4 \\ \midrule
   \textcolor{red}{\textbf{\textit{40}}} & \textcolor{red}{\textbf{\textit{41}}} & \textcolor{red}{\textbf{\textit{42}}} & \textcolor{red}{\textbf{\textit{43}}} \\
    44    & 45    & 46    & 47 \\
    48    & 49    & 50    & 51 \\
    52    & 53    & 54    & 55 \\ \bottomrule
    \end{tabularx}
    };
\end{tikzpicture}
\end{subfigure}

\vspace{0.5cm}
\begin{subfigure}{\textwidth}
\centering
\caption{An unfair path formation \citep[European~qualifiers~play-offs~scenario~1]{UEFA2017g}}
\label{Fig5b}

\begin{tikzpicture}[scale=1, auto=center, transform shape, >=triangle 45]
  \path
    (-6,0) coordinate (A) node {
    \begin{tabularx}{0.2\textwidth}{CC} \toprule
    \multicolumn{2}{c}{\textbf{Path A}} \\ \midrule
    17	& 30 \\ 
    23	& 29 \\ \bottomrule
    \end{tabularx}
    }
    (-2,0) coordinate (B) node {
    \begin{tabularx}{0.2\textwidth}{CC} \toprule
    \multicolumn{2}{c}{\textbf{Path B}} \\ \midrule
    \textit{13}	& 24 \\ 
    20	& 21 \\ \bottomrule
    \end{tabularx}
    }    
    (2,0) coordinate (C) node {
    \begin{tabularx}{0.2\textwidth}{CC} \toprule
    \multicolumn{2}{c}{\textbf{Path C}} \\ \midrule
    \textit{25}	& 32 \\ 
    \textit{27}	& \textit{28} \\ \bottomrule
    \end{tabularx}
    }
    (6,0) coordinate (D) node {
    \begin{tabularx}{0.2\textwidth}{CC} \toprule
    \multicolumn{2}{c}{\textbf{Path D}} \\ \midrule
    \textit{40}	& \textit{43} \\ 
    \textit{41}	& \textit{42} \\ \bottomrule
    \end{tabularx}
    };
\end{tikzpicture}
\end{subfigure}
    
\vspace{0.5cm}
\begin{subfigure}{\textwidth}
\centering
\caption{A fair path formation}
\label{Fig5c}

\begin{tikzpicture}[scale=1, auto=center, transform shape, >=triangle 45]
  \path
    (-6,0) coordinate (A) node {
    \begin{tabularx}{0.2\textwidth}{CC} \toprule
    \multicolumn{2}{c}{\textbf{Path A}} \\ \midrule
    \textit{13}	& 30 \\ 
    23	& 29 \\ \bottomrule
    \end{tabularx}
    }
    (-2,0) coordinate (B) node {
    \begin{tabularx}{0.2\textwidth}{CC} \toprule
    \multicolumn{2}{c}{\textbf{Path B}} \\ \midrule
    17	& 24 \\ 
    20	& 21 \\ \bottomrule
    \end{tabularx}
    }    
    (2,0) coordinate (C) node {
    \begin{tabularx}{0.2\textwidth}{CC} \toprule
    \multicolumn{2}{c}{\textbf{Path C}} \\ \midrule
    \textit{25}	& 32 \\ 
    \textit{27}	& \textit{28} \\ \bottomrule
    \end{tabularx}
    }
    (6,0) coordinate (D) node {
    \begin{tabularx}{0.2\textwidth}{CC} \toprule
    \multicolumn{2}{c}{\textbf{Path D}} \\ \midrule
    \textit{40}	& \textit{43} \\ 
    \textit{41}	& \textit{42} \\ \bottomrule
    \end{tabularx}
    };
\end{tikzpicture}
\end{subfigure}

\end{figure}


Figure~\ref{Fig5a} outlines such a scenario that has been ``officially presented'' in \citet[European~qualifiers~play-offs~scenario~1]{UEFA2017g}, together with a possible path formation shown in Figure~\ref{Fig5b}.
The play-off Path B of team $13$ is more difficult than the play-off Path A of team $17$ due to condition~\ref{Con1} of Definition~\ref{Def32} as $20 \sim 23$, $21 \succ 29$, and $24 \succ 30$, furthermore, both teams play the semifinal at home.
However, team $13$ is a group winner in League B and team $17$ is non-group winner in League B, therefore this path formation, allowed by the regulation, is unfair: qualification through the play-offs will be more difficult for team $13$ than for team $17$.

Fairness can be achieved by exchanging teams $13$ and $17$ as presented in Figure~\ref{Fig5c}.
\end{proof}

This unfairness is not realised in \citet[European~qualifiers~play-offs~scenario~1]{UEFA2017g}.

The following example illustrates why condition~\ref{Con2} of Definition~\ref{Def32} can be necessary.

\begin{figure}[ht!]
\centering
\caption{The path formation rule of the play-offs may violate the second fairness condition \vspace{0.25cm} \\ 
\footnotesize{\textit{Italic} numbers indicate the 16 Nations League group winners \\
\textcolor{blue}{Blue} \underline{underlined} numbers indicate the 20 teams directly qualified \\
\textcolor{red}{Red} \textbf{bold} numbers indicate the 16 teams selected for the play-offs}}
\label{Fig6}

\begin{subfigure}{\textwidth}
\centering
\caption{A feasible scenario}
\label{Fig6a}
\begin{tikzpicture}[scale=1, auto=center, transform shape, >=triangle 45]
  \path
    (-4,0) coordinate (A) node {
    \begin{tabularx}{0.4\textwidth}{CCCC} \toprule
    \multicolumn{4}{c}{\textbf{League A}} \\ 
    A1    & A2    & A3    & A4 \\ \midrule
    \textcolor{blue}{\underline{\textit{1}}} & \textcolor{blue}{\underline{\textit{2}}} & \textcolor{blue}{\underline{\textit{3}}} & \textcolor{blue}{\underline{\textit{4}}} \\
    \textcolor{blue}{\underline{5}}     & \textcolor{blue}{\underline{6}}     & \textcolor{blue}{\underline{7}}     & \textcolor{blue}{\underline{8}} \\
    \textcolor{blue}{\underline{9}}     & \textcolor{blue}{\underline{10}}    & \textcolor{blue}{\underline{11}}    & \textcolor{blue}{\underline{12}} \\ \bottomrule
    \end{tabularx}
    }
    (4,0) coordinate (B) node {
    \begin{tabularx}{0.4\textwidth}{CCCC} \toprule
    \multicolumn{4}{c}{\textbf{League B}} \\ 
    B1    & B2    & B3    & B4 \\ \midrule
    \textcolor{red}{\textbf{\textit{13}}} & \textcolor{red}{\textbf{\textit{14}}} & \textcolor{blue}{\underline{\textit{15}}} & \textcolor{blue}{\underline{\textit{16}}} \\
    \textcolor{blue}{\underline{17}} & \textcolor{blue}{\underline{18}}    & \textcolor{red}{\textbf{19}}    & \textcolor{blue}{\underline{20}} \\
    \textcolor{red}{\textbf{21}} & \textcolor{blue}{\underline{22}}    & \textcolor{red}{\textbf{23}}    & \textcolor{red}{\textbf{24}} \\ \bottomrule
    \end{tabularx}
    };
\end{tikzpicture}
\begin{tikzpicture}[scale=1, auto=center, transform shape, >=triangle 45]
  \path
    (-4,0) coordinate (C) node {
    \begin{tabularx}{0.4\textwidth}{CCCC} \toprule
    \multicolumn{4}{c}{\textbf{League C}} \\ 
    C1    & C2    & C3    & C4 \\ \midrule
    \textcolor{red}{\textbf{\textit{25}}} & \textcolor{blue}{\underline{\textit{26}}} & \textcolor{red}{\textbf{\textit{27}}} & \textcolor{red}{\textbf{\textit{28}}} \\
    \textcolor{red}{\textbf{29}} & \textcolor{red}{\textbf{30}} & \textcolor{blue}{\underline{31}}    & \textcolor{red}{\textbf{32}} \\
    33    & 34    & 35    & 36 \\
          & 37    & 38    & 39 \\ \bottomrule
    \end{tabularx}
    }
    (4,0) coordinate (D) node {
    \begin{tabularx}{0.4\textwidth}{CCCC} \toprule
    \multicolumn{4}{c}{\textbf{League D}} \\ 
    D1    & D2    & D3    & D4 \\ \midrule
   \textcolor{red}{\textbf{\textit{40}}} & \textcolor{red}{\textbf{\textit{41}}} & \textcolor{red}{\textbf{\textit{42}}} & \textcolor{red}{\textbf{\textit{43}}} \\
    44    & 45    & 46    & 47 \\
    48    & 49    & 50    & 51 \\
    52    & 53    & 54    & 55 \\ \bottomrule
    \end{tabularx}
    };
\end{tikzpicture}
\end{subfigure}

\vspace{0.5cm}
\begin{subfigure}{\textwidth}
\centering
\caption{An unfair path formation allowed by the rules}
\label{Fig6b}

\begin{tikzpicture}[scale=1, auto=center, transform shape, >=triangle 45]
  \path
    (-6,0) coordinate (A) node {
    \begin{tabularx}{0.2\textwidth}{CC} \toprule
    \multicolumn{2}{c}{\textbf{Path A}} \\ \midrule
    \textit{13}	& 30 \\ 
    \textit{14}	& 29 \\ \bottomrule
    \end{tabularx}
    }
    (-2,0) coordinate (B) node {
    \begin{tabularx}{0.2\textwidth}{CC} \toprule
    \multicolumn{2}{c}{\textbf{Path B}} \\ \midrule
    19	& 24 \\ 
    21	& 23 \\ \bottomrule
    \end{tabularx}
    }    
    (2,0) coordinate (C) node {
    \begin{tabularx}{0.2\textwidth}{CC} \toprule
    \multicolumn{2}{c}{\textbf{Path C}} \\ \midrule
    \textit{25}	& 32 \\ 
    \textit{27}	& \textit{28} \\ \bottomrule
    \end{tabularx}
    }
    (6,0) coordinate (D) node {
    \begin{tabularx}{0.2\textwidth}{CC} \toprule
    \multicolumn{2}{c}{\textbf{Path D}} \\ \midrule
    \textit{40}	& \textit{43} \\ 
    \textit{41}	& \textit{42} \\ \bottomrule
    \end{tabularx}
    };
\end{tikzpicture}
\end{subfigure}
    
\vspace{0.5cm}
\begin{subfigure}{\textwidth}
\centering
\caption{A fair path formation}
\label{Fig6c}

\begin{tikzpicture}[scale=1, auto=center, transform shape, >=triangle 45]
  \path
    (-6,0) coordinate (A) node {
    \begin{tabularx}{0.2\textwidth}{CC} \toprule
    \multicolumn{2}{c}{\textbf{Path A}} \\ \midrule
    \textit{13}	& 30 \\ 
    19	& 29 \\ \bottomrule
    \end{tabularx}
    }
    (-2,0) coordinate (B) node {
    \begin{tabularx}{0.2\textwidth}{CC} \toprule
    \multicolumn{2}{c}{\textbf{Path B}} \\ \midrule
    \textit{14}	& 24 \\ 
    21	& 23 \\ \bottomrule
    \end{tabularx}
    }    
    (2,0) coordinate (C) node {
    \begin{tabularx}{0.2\textwidth}{CC} \toprule
    \multicolumn{2}{c}{\textbf{Path C}} \\ \midrule
    \textit{25}	& 32 \\ 
    \textit{27}	& \textit{28} \\ \bottomrule
    \end{tabularx}
    }
    (6,0) coordinate (D) node {
    \begin{tabularx}{0.2\textwidth}{CC} \toprule
    \multicolumn{2}{c}{\textbf{Path D}} \\ \midrule
    \textit{40}	& \textit{43} \\ 
    \textit{41}	& \textit{42} \\ \bottomrule
    \end{tabularx}
    };
\end{tikzpicture}
\end{subfigure}

\end{figure}


\begin{example} \label{Examp31}
Consider the case shown in Figure~\ref{Fig6a} and the path formation shown in Figure~\ref{Fig6b}.
Condition~\ref{Con1} of Definition~\ref{Def32} is not violated because the group winners in League B (teams $13$, $14$) face one group winner from League B and two teams from League C (teams $29$, $30$) in Path A, while a non-group winner in League B (teams $19$, $21$, $23$, $24$) faces three teams from League B in Path B.

However, there are two group winners from League B in Path A (both playing at home in the semifinal) but none in Path B, which seems to be unfair because they can play against each other in the final of Path A.
Condition~\ref{Con2} of Definition~\ref{Def32} holds for Path A of team $14$ and Path B of team $19$ since $13 \succ 21$. Hence, Figure~\ref{Fig6b} corresponds to an unfair path formation.

Fairness can be achieved by exchanging teams $14$ and $19$ as presented in Figure~\ref{Fig6c}.
\end{example}

\section{The proposed solutions} \label{Sec4}

This section provides our amendments to solve the issues discussed in Section~\ref{Sec3}.

\subsection{Guaranteeing consistency} \label{Sec41}

The inconsistency of the team selection rule can be handled straightforwardly by supplementing \citet[Article~16.02d]{UEFA2018c} with a specific clause:

``\emph{If fewer than four teams from one league enter the play-offs, the remaining slots are allocated on the basis of the overall UEFA Nations League rankings to the best-ranked of the teams that have not already qualified for the final tournament, subject to the restriction that group winners cannot be in a play-off path with higher-ranked teams\textbf{, or, if this is not possible, they form a play-off path with the lowest-ranked teams that have not already qualified}}.''

Figure~\ref{Fig4b} shows the result of this proposal: since there are only three teams from League D that failed to qualify directly and there is at least one group winner among them, the remaining slot in Path D is allocated to team 39, the lowest-ranked team that has not already qualified.

\subsection{Guaranteeing fairness} \label{Sec42}

In order to avoid a possibly unfair path formation, \citet[Article~16.03a]{UEFA2018c} is worth supplementing with a further condition:

``\emph{A group winner cannot form a path with a team from a higher-ranked league in the overall UEFA Nations League rankings \textbf{(or, if this is not possible, they form a play-off path with the lowest-ranked teams that have not already qualified), and cannot be in a more difficult play-off path than a non-group winner of the same league.}}''

The rules should also explain the comparison of play-off paths with respect to their difficulty, analogously to Definition~\ref{Def32}.

It remains a question how fairness can be achieved.
We propose to follow the solution used in the proof of Proposition~\ref{Prop32} and in Example~\ref{Examp31}: if a group winner is in a more difficult play-off path than a non-group winner of the same league, then simply exchange the two teams. Note that this does not violate any principle of the regulation. If there exists more than one appropriate exchange (for instance, both of the teams $13$ and $14$ have a more difficult play-off path than team $19$), it is worth choosing randomly between them.

According to the following statement, the application of this procedure will provide fairness.

\begin{proposition} \label{Prop41}
There always exists a fair path formation that can be reached by exchanging group winners with non-group winners of the same league if the former are in a more difficult play-off path than the latter.
\end{proposition}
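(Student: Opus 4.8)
The plan is to establish three things: (1) each exchange of a group winner with a same-league non-group winner that currently sits in an easier path is \emph{legal}, i.e.\ the resulting assignment still satisfies Article~16.03a--b; (2) if the leagues are handled one at a time in the order A, B, C, D, then exchanges carried out for one league never destroy the fairness already obtained for the stronger leagues; and (3) for a single league the procedure halts after finitely many exchanges at an assignment with no remaining violation. Together these give a path formation in which no group winner is in a more difficult path than a non-group winner of the same league, hence a fair one by Definition~\ref{Def33}, reached by exchanges of exactly the prescribed type.

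Two structural facts feed everything. First, by Article~16.03a a path containing a group winner of league $L$ contains only teams from leagues not higher-ranked than $L$; applying the same article to a second group winner in that path forces it into $L$ as well, so all group winners of a path come from a single league, the highest-ranked league appearing in that path. Second, if a league has four or more play-off teams, Article~16.03b puts four (hence all, since a path has four slots) into one all-that-league path, so two teams of a league sit in different paths only when that league is ``small'', and no exchange is ever considered inside a league whose teams are forced together. I would also isolate the elementary \emph{reversal} observation used repeatedly: if $P_i$ is more difficult than $P_j$ and we swap $i$ with $j$, then afterwards the path of $i$ is \emph{not} more difficult than the path of $j$, because the new team-mates of $i$ are precisely the old team-mates of $j$ and vice versa, so that neither condition~\ref{Con1} nor condition~\ref{Con2} of Definition~\ref{Def32} can hold in the reversed direction.

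For legality, let $g$ be a group winner of league $L$ with $P_g$ more difficult than $P_n$ for a non-group winner $n$ of $L$. The strongest team-mate of $g$ lies in a league not higher-ranked than $L$ (Article~16.03a), and either condition of Definition~\ref{Def32} forces the strongest team-mate of $n$ to be no stronger than it; hence all three team-mates of $n$ come from leagues not higher-ranked than $L$. So moving $g$ into $P_n$ keeps Article~16.03a, moving the non-group winner $n$ anywhere is unrestricted, and Article~16.03b survives: a four-team block of a league $M\neq L$ consists only of teams of $M$, hence neither $g$ nor $n$, so it is untouched, while if $L$ itself has such a block, swapping one team of $L$ for another leaves it a block of $L$. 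The same computation shows that the two altered paths $P_g$ and $P_n$ consist only of teams from leagues not higher-ranked than $L$, which gives step~(2): processing A, then B, then C, then D, an exchange made for league $L$ moves no team of a strictly stronger league and changes the composition of no path containing such a team, so every difficulty comparison internal to the already-processed stronger leagues is preserved.

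The remaining and, I expect, most laborious step is (3), termination for a fixed league $L$. I would split into the case where $L$ has at most three play-off teams — only a handful of placements of the (at most two) group winners and (at most two) non-group winners are possible, and a short check using the reversal observation shows that at most two exchanges suffice and that repairing one violation cannot re-open another — and the case where $L$ has a forced four-team block $Q$ plus some extra teams scattered elsewhere. In the latter case the crucial point is that a group winner of $L$ can never acquire a team-mate stronger than a group winner of $L$ (Article~16.03a again): this strength bound means that any comparison which could turn into a new violation after an exchange would require some group winner's path to contain team-mates stronger than a group winner of $L$, which is impossible, so no cascade of exchanges develops. (A cleaner route would be a single monotone potential summed over group winners, but since condition~\ref{Con2} of Definition~\ref{Def32} constrains only the single strongest team-mate, natural candidates such as the total strength of team-mates need not decrease when a group winner is moved into a path already rich in group winners, so I would fall back on the case analysis.) Once A, B, C and D have each been rendered free of violations, the assignment is fair, which proves the proposition.
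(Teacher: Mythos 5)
Your steps (1) and (2) --- legality of each exchange under Articles 16.03a--b and the preservation of fairness for already-processed stronger leagues --- are sound, and in fact more careful than the paper, which merely asserts in the surrounding text that the exchange ``does not violate any principle of the regulation''. The gap is in step (3), which is where the entire content of Proposition~\ref{Prop41} lives. Your ``no cascade'' claim --- that a new violation after an exchange ``would require some group winner's path to contain team-mates stronger than a group winner of $L$'' --- is a non sequitur. A violation of Definition~\ref{Def33} is a \emph{relative} comparison between two paths, and swapping $g$ into $P_n$ and $n$ into $P_g$ changes the opponent sets of every other League-$L$ team in those two paths: in particular, a non-group winner $n''$ of $L$ that was a team-mate of $g$ now has the strictly easier opponent set $O_{n''} \cup \{n\} \setminus \{g\}$, which can newly fall below the opponent set of a third group winner $g''$ of $L$ sitting in an untouched path. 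That is a fresh violation created without any team anywhere exceeding the ``group winner of $L$'' strength bound, so the bound does not rule out cascades. Similarly, the ``at most two exchanges suffice'' assertion for the small-league case is essentially the statement the paper explicitly declines to prove (``We conjecture that two steps are always sufficient but its verification seems to be even more cumbersome than the proof of Proposition~\ref{Prop41}''), so it cannot be dismissed as a short check.

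The paper closes this hole by a different device: it never exchanges an arbitrary violating pair, but always selects an \emph{extremal} pair --- a group winner $i$ whose opponent set $O_i$ is maximal among group winners of the league and a non-group winner $j$ whose opponent set $O_j$ is minimal among non-group winners --- and then tracks how every opponent set in the two affected paths changes under the swap, concluding that the hardest group-winner opponent set weakly decreases while the easiest non-group-winner opponent set weakly increases; this monotone movement of the extremes is what substitutes for your missing termination argument. (Your ``reversal observation'' also needs care: after the swap $i$ inherits $j$'s opponents as a set, but not necessarily $j$'s home/away status in the semifinal, since $i$'s rank within the new path may differ from $j$'s.) To repair your version you would need either the extremal selection plus this monotonicity bookkeeping, or an explicit potential function that strictly decreases with each exchange; the case analysis as sketched does not deliver either.
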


\begin{proof}
Focus on the sets of opponents $O_k = \{ k_1, k_2, k_3 \}$ of every team $k$ in League $X$. It can be assumed without loss of generality that $k_1 < k_2 < k_3$.
Define a partial order between the sets $O_k = \{ k_1, k_2, k_3 \}$ and $O_\ell = \{ \ell_1, \ell_2, \ell_3 \}$ with respect to their strength on the basis of Definition~\ref{Def32}, that is, $O_k \succ O_\ell$ if either $k_1 \succ \ell_1$, or $k_1 \succeq \ell_1$, $k_2 \succeq \ell_2$, and $k_3 \succeq \ell_3$, furthermore, $k_m \succ \ell_m$ for at least one $1 \leq m \leq 3$, while it is not true that team $k$ plays away but team $\ell$ plays at home in the semifinal. Otherwise, $O_k \sim O_\ell$.

If the set of teams that are group winners in League $X$ and participate in the play-offs is empty, then fairness holds for this particular league.
Otherwise, take team $i$ that is a group winner in League $X$ and $O_i \succeq O_k$ for all team $k$ that is a group winner in League $X$. 

If the set of teams that are non-group winners in League $X$ and participate in the play-offs is empty, then fairness holds for this particular league.
Otherwise, take team $j$ that is a non-group winner in League $X$ and $O_j \preceq O_\ell$ for all team $\ell$ that is a non-group winner in League $X$. 

If $O_i \preceq O_j$, then no group winner in League $X$ has a more difficult play-off path than any non-group winner in League $X$, therefore fairness holds for this particular league.

If $O_i \succ O_j$, exchange teams $i$ and $j$ in their paths $P_i$ and $P_j$. The new paths are $P_i^* = O_i \cup \{ j \}$ and $P_j^* = O_j \cup \{ i \}$ with the following implications for the new sets of opponents $O_k^*$:
\begin{itemize}
\item
$O_i^* = O_j$ for the particular team $i$ in path $P_i$ (and $P_j^*$) that is a group winner in League $X$;
\item
$O_k^* = \left( O_k \cup \{ j \} \setminus \{ i \} \right) \prec O_k$ for any team $k$ in path $P_i$ (and $P_i^*$) that is a group winner in League $X$;
\item
$O_\ell^* = \left( O_\ell \cup \{ j \} \setminus \{ i \} \right) \sim O_i = O_j^*$ for any team $\ell$ in path $P_i$ (and $P_i^*$) that is a non-group winner in League $X$;
\item
$O_j^* = O_i$ for the particular team $j$ in path $P_j$ (and $P_i^*$) that is a non-group winner in League $X$;
\item
$O_k^* = \left( O_k \cup \{ i \} \setminus \{ j \} \right) \sim O_j = O_i^*$ for any team $k$ in path $P_j$ (and $P_j^*$) that is a group winner in League $X$;
\item
$O_\ell^* = \left( O_\ell \cup \{ i \} \setminus \{ j \} \right) \succ O_\ell$ for any team $\ell$ in path $P_i$ (and $P_i^*$) that is a non-group winner in League $X$.
\end{itemize}
Consequently, $O_j$, one of the lowest-ranked sets of opponents for non-group winners in League $X$ becomes stronger, while $O_i$, one of the highest-ranked sets of opponents for group winners in League $X$ becomes weaker, therefore changing teams $i$ and $j$ is a step towards achieving a fair path formation.

Since the number of teams and leagues is finite, this procedure is guaranteed to finish after a finite number of steps, thus a fair path formation should exist.
\end{proof}

An unfair path formation can usually be corrected with only one exchange of teams as we have seen in the proof of Proposition~\ref{Prop32} and in Example~\ref{Examp31}.
Two steps are necessary, for instance, if Path B is composed of teams $13$, $14$, $15$, and $16$ (four group winners from League B) and Path A is composed of teams $17$, $18$, $19$ and $29$ (three non-group winners from League B and one from League C).
We conjecture that two steps are always sufficient but its verification seems to be even more cumbersome than the proof of Proposition~\ref{Prop41}.

It is worth noting here that UEFA does not provide any algorithm for team selection and path formation rules which guarantees that the conditions of \citet[Articles~16.01--16.03]{UEFA2018c} hold (\citet{UEFA2017g} presents only three possible scenarios). Nonetheless, it would not be a futile exercise because the detailed description of an algorithm would help to identify some problems such as inconsistency. 

\section{Conclusions} \label{Sec5}

This critical note has investigated the regulation of the UEFA Euro 2020 qualifying play-offs. We have identified two theoretical shortcomings of it, that is, logical inconsistency between the principles and the possible unfair formation of the play-off paths. They, together with the opportunity for strategic manipulation in the UEFA Nations League \citep{Csato2020b}, warn to the dangers inherent in reforming tournament designs with similarly complicated rules.

While the presented issues may occur with a low probability, the integrity of sports rules requires \emph{all} possible scenarios to be considered in a fair way.
Hopefully, our reasoning will convince UEFA that the proposed amendments---which does not increase the complexity of the rules---are worth implementing.

\section*{Acknowledgements}
\addcontentsline{toc}{section}{Acknowledgements}
\noindent
Four anonymous reviewers, \emph{Dezs\H{o} Bednay}, and \emph{Tam\'as Halm} have provided valuable comments and suggestions on earlier drafts. \\
We are indebted to the \href{https://en.wikipedia.org/wiki/Wikipedia_community}{Wikipedia community} for contributing to our research by collecting and structuring useful information on the sports tournament discussed. \\
The research was supported by the MTA Premium Postdoctoral Research Program grant PPD2019-9/2019.

\bibliographystyle{apalike}
\bibliography{All_references}

\clearpage

\section*{Appendix}
\addcontentsline{toc}{section}{Appendix}

\renewcommand\thefigure{A.\arabic{figure}}
\setcounter{figure}{0}

\makeatletter
\renewcommand\p@subfigure{A.\arabic{figure}}
\makeatother

\input{Figure1_team_selection_example}

\begin{figure}[ht!]
\centering
\caption{The result of the qualification for the UEFA Euro 2020: path formation}
\label{Fig2}

\begin{subfigure}{\textwidth}
\centering
\caption{The (real-world) scenario of Example~\ref{Examp21}: teams in the play-offs \vspace{0.25cm} \\ 
\footnotesize{\textcolor{blue}{Blue} rectangles with \emph{italic} numbers indicate group winners}}
\label{Fig2a}
\begin{tikzpicture}[scale=1, auto=center, transform shape, >=triangle 45]
\node at (0,-1) {\textbf{League A}};
    \node[draw,circle,minimum width=25pt] at (0,-3)    {12};
    \node[draw,circle,minimum width=25pt,white] at (-2,-3)    {};
    \node[draw,circle,minimum width=25pt,white] at (2,-3)    {};
\end{tikzpicture}
\hspace{3cm}
\begin{tikzpicture}[scale=1, auto=center, transform shape, >=triangle 45]
\node at (0,-1) {\textbf{League B}};
    \node[draw,rectangle,minimum width=20pt,minimum height=20pt, blue] at (0,-2)   {\textit{13}};
    \node[draw,circle,minimum width=25pt] at (-4/3,-3)   {21};
    \node[draw,circle,minimum width=25pt] at (0,-3)      {23};
    \node[draw,circle,minimum width=25pt] at (4/3,-3)    {24};
    \node[draw,circle,minimum width=25pt,white] at (-2,-3)    {};
    \node[draw,circle,minimum width=25pt,white] at (2,-3)    {};
\end{tikzpicture}

\vspace{0.5cm}
\begin{tikzpicture}[scale=1, auto=center, transform shape, >=triangle 45]
\node at (0,-1) {\textbf{League C}};
    \node[draw,rectangle,minimum width=20pt,minimum height=20pt, blue] at (-4/3,-2)   {\textit{24}};
    \node[draw,rectangle,minimum width=20pt,minimum height=20pt, blue] at (0,-2) {\textit{26}};
    \node[draw,rectangle,minimum width=20pt,minimum height=20pt, blue] at (4/3,-2)  {\textit{27}};
    \node[draw,circle,minimum width=25pt] at (-2,-3)   {29};
    \node[draw,circle,minimum width=25pt] at (-2/3,-3) {30};
    \node[draw,circle,minimum width=25pt] at (2/3,-3)  {31};
    \node[draw,circle,minimum width=25pt] at (2,-3)    {32};
\end{tikzpicture}
\hspace{3cm}
\begin{tikzpicture}[scale=1, auto=center, transform shape, >=triangle 45]
\node at (0,-1) {\textbf{League D}};
    \node[draw,rectangle,minimum width=20pt,minimum height=20pt, blue] at (-2,-2)   {\textit{40}};
    \node[draw,rectangle,minimum width=20pt,minimum height=20pt, blue] at (-2/3,-2) {\textit{41}};
    \node[draw,rectangle,minimum width=20pt,minimum height=20pt, blue] at (2/3,-2)  {\textit{42}};
    \node[draw,rectangle,minimum width=20pt,minimum height=20pt, blue] at (2,-2)    {\textit{43}};
    \node[draw,circle,minimum width=25pt,white] at (2,-3)    {};
\end{tikzpicture}
\end{subfigure}

\vspace{1cm}
\begin{subfigure}{\textwidth}
\centering
\caption{The (real-world) scenario of Example~\ref{Examp21}: path formation for the play-offs}
\label{Fig2b}
\begin{tikzpicture}[scale=1, auto=center, transform shape, >=triangle 45]
\node at (0,0) 	{\begin{tabular}{c}
	\textbf{If four or more teams from a league enter the play-offs, a path} \\
	\textbf{with four teams from the league in question must be formed} \\
	BUT: a group winner cannot form a path with a team from a higher-ranked league
	\end{tabular} };
\end{tikzpicture}
\begin{tikzpicture}[scale=1, auto=center, transform shape, >=triangle 45]
	\node[draw,rectangle,minimum width=50pt,minimum height=20pt] at (-6,-1.5)   {\textbf{Path D}};
	\node[align=left] at (-3,-1.5)   {Semifinal 1 (SF1)};
	\node[draw,rectangle,minimum width=20pt,minimum height=20pt, blue] at (0,-1.5)   {\textit{40}};
	\node at (2,-1.5)   {against};
	\node[draw,rectangle,minimum width=20pt,minimum height=20pt, blue] at (4,-1.5)   {\textit{43}};
	\node at (6,-1.5)   {{\begin{tabular}{c}\textcolor{white}{drawn} \\ \textcolor{white}{randomly} \end{tabular}}};
	\node[align=left] at (-3,-2.5)   {Semifinal 2 (SF2)};
	\node[draw,rectangle,minimum width=20pt,minimum height=20pt, blue] at (0,-2.5)   {\textit{41}};
	\node at (2,-2.5)   {against};
	\node[draw,rectangle,minimum width=20pt,minimum height=20pt, blue] at (4,-2.5)   {\textit{42}};
	\node[align=left] at (-3,-3.5)   {Final \textcolor{ForestGreen}{(host drawn)}};
	\node at (0,-3.5)   {\textbf{$\mathcal{W}$/SF1}};
	\node at (2,-3.5)   {against};
	\node at (4.73,-3.5)   {\textbf{$\mathcal{W}$/SF2} \textcolor{red}{$\Longrightarrow$} \textbf{$\mathcal{W}$}};
\end{tikzpicture}

\vspace{0.2cm}
\begin{tikzpicture}[scale=1, auto=center, transform shape, >=triangle 45]
	\node[draw,rectangle,minimum width=50pt,minimum height=20pt] at (-6,-1.5)   {\textbf{Path C}};
	\node[align=left] at (-3,-1.5)   {Semifinal 1 (SF1)};
	\node[draw,rectangle,minimum width=20pt,minimum height=20pt, blue] at (0,-1.5)   {\textit{25}};
	\node at (2,-1.5)   {against};
	\node[draw,circle,minimum width=25pt,minimum height=25pt] at (4,-1.5)   {30};
	\node at (6,-1.5)   {{\begin{tabular}{c}\textcolor{ForestGreen}{drawn} \\ \textcolor{ForestGreen}{randomly} \end{tabular}}};
	\node[align=left] at (-3,-2.5)   {Semifinal 2 (SF2)};
	\node[draw,rectangle,minimum width=20pt,minimum height=20pt, blue] at (0,-2.5)   {\textit{26}};
	\node at (2,-2.5)   {against};
	\node[draw,rectangle,minimum width=20pt,minimum height=20pt, blue] at (4,-2.5)   {\textit{27}};
	\node[align=left] at (-3,-3.5)   {Final \textcolor{ForestGreen}{(host drawn)}};
	\node at (0,-3.5)   {\textbf{$\mathcal{W}$/SF2}};
	\node at (2,-3.5)   {against};
	\node at (4.73,-3.5)   {\textbf{$\mathcal{W}$/SF1} \textcolor{red}{$\Longrightarrow$} \textbf{$\mathcal{W}$}};
\end{tikzpicture}

\vspace{0.2cm}
\begin{tikzpicture}[scale=1, auto=center, transform shape, >=triangle 45]
	\node[draw,rectangle,minimum width=50pt,minimum height=20pt] at (-6,-1.5)   {\textbf{Path B}};
	\node[align=left] at (-3,-1.5)   {Semifinal 1 (SF1)};
	\node[draw,rectangle,minimum width=20pt,minimum height=20pt, blue] at (0,-1.5)   {\textit{13}};
	\node at (2,-1.5)   {against};
	\node[draw,circle,minimum width=25pt,minimum height=25pt] at (4,-1.5)   {24};
	\node at (6,-1.5)   {{\begin{tabular}{c}\textcolor{white}{drawn} \\ \textcolor{white}{randomly} \end{tabular}}};
	\node[align=left] at (-3,-2.5)   {Semifinal 2 (SF2)};
	\node[draw,circle,minimum width=25pt,minimum height=25pt] at (0,-2.5)   {21};
	\node at (2,-2.5)   {against};
	\node[draw,circle,minimum width=25pt,minimum height=25pt] at (4,-2.5)   {23};
	\node[align=left] at (-3,-3.5)   {Final \textcolor{ForestGreen}{(host drawn)}};
	\node at (0,-3.5)   {\textbf{$\mathcal{W}$/SF1}};
	\node at (2,-3.5)   {against};
	\node at (4.73,-3.5)   {\textbf{$\mathcal{W}$/SF2} \textcolor{red}{$\Longrightarrow$} \textbf{$\mathcal{W}$}};
\end{tikzpicture}

\vspace{0.2cm}
\begin{tikzpicture}[scale=1, auto=center, transform shape, >=triangle 45]
	\node[draw,rectangle,minimum width=50pt,minimum height=20pt] at (-6,-1.5)   {\textbf{Path A}};
	\node[align=left] at (-3,-1.5)   {Semifinal 1 (SF1)};
	\node[draw,circle,minimum width=25pt,minimum height=25pt] at (0,-1.5)   {12};
	\node at (2,-1.5)   {against};
	\node[draw,circle,minimum width=25pt,minimum height=25pt] at (4,-1.5)   {32};
	\node at (6,-1.5)   {{\begin{tabular}{c}\textcolor{white}{drawn} \\ \textcolor{white}{randomly} \end{tabular}}};
	\node[align=left] at (-3,-2.5)   {Semifinal 2 (SF2)};
	\node[draw,circle,minimum width=25pt,minimum height=25pt] at (0,-2.5)   {29};
	\node at (2,-2.5)   {against};
	\node[draw,circle,minimum width=25pt,minimum height=25pt] at (4,-2.5)   {31};
	\node[align=left] at (-3,-3.5)   {Final \textcolor{ForestGreen}{(host drawn)}};
	\node at (0,-3.5)   {\textbf{$\mathcal{W}$/SF2}};
	\node at (2,-3.5)   {against};
	\node at (4.73,-3.5)   {\textbf{$\mathcal{W}$/SF1} \textcolor{red}{$\Longrightarrow$} \textbf{$\mathcal{W}$}};
\end{tikzpicture}

\end{subfigure}

\end{figure}


\end{document}